\let\proof\relax   
\newcommand\alex[1]{\add[as]{#1}}
\newcommand\alexn[1]{\notee[as]{#1}}
\newtheorem{lemma}{Lemma}
\newtheorem{theorem}{Theorem}
\newtheorem{example}{Example}
\newcommand*{\transpose}{%
  {\mathpalette\@transpose{}}%
}
\begin{document}

\newcommand{\SB}[3]{
\sum_{#2 \in #1}\biggl|\overline{X}_{#2}\biggr| #3
\biggl|\bigcap_{#2 \notin #1}\overline{X}_{#2}\biggr|
}

\newcommand{\Mod}[1]{\ (\textup{mod}\ #1)}

\newcommand{\overbar}[1]{\mkern 0mu\overline{\mkern-0mu#1\mkern-8.5mu}\mkern 6mu}

\makeatletter
\newcommand*\nss[3]{%
  \begingroup
  \setbox0\hbox{$\m@th\scriptstyle\cramped{#2}$}%
  \setbox2\hbox{$\m@th\scriptstyle#3$}%
  \dimen@=\fontdimen8\textfont3
  \multiply\dimen@ by 4             
  \advance \dimen@ by \ht0
  \advance \dimen@ by -\fontdimen17\textfont2
  \@tempdima=\fontdimen5\textfont2  
  \multiply\@tempdima by 4
  \divide  \@tempdima by 5          
  \ifdim\dimen@<\@tempdima
    \ht0=0pt                        
    \@tempdima=\fontdimen5\textfont2
    \divide\@tempdima by 4          
    \advance \dimen@ by -\@tempdima 
    \ifdim\dimen@>0pt
      \@tempdima=\dp2
      \advance\@tempdima by \dimen@
      \dp2=\@tempdima
    \fi
  \fi
  #1_{\box0}^{\box2}%
  \endgroup
  }
\makeatother

\makeatletter
\renewenvironment{proof}[1][\proofname]{\par
  \pushQED{\qed}%
  \normalfont \topsep6\p@\@plus6\p@\relax
  \trivlist
  \item[\hskip\labelsep
        \itshape
    #1\@addpunct{:}]\ignorespaces
}{%
  \popQED\endtrivlist\@endpefalse
}
\makeatother

\makeatletter
\newsavebox\myboxA
\newsavebox\myboxB
\newlength\mylenA

\newcommand*\xoverline[2][0.75]{%
    \sbox{\myboxA}{$\m@th#2$}%
    \setbox\myboxB\null
    \ht\myboxB=\ht\myboxA%
    \dp\myboxB=\dp\myboxA%
    \wd\myboxB=#1\wd\myboxA
    \sbox\myboxB{$\m@th\overline{\copy\myboxB}$}
    \setlength\mylenA{\the\wd\myboxA}
    \addtolength\mylenA{-\the\wd\myboxB}%
    \ifdim\wd\myboxB<\wd\myboxA%
       \rlap{\hskip 0.5\mylenA\usebox\myboxB}{\usebox\myboxA}%
    \else
        \hskip -0.5\mylenA\rlap{\usebox\myboxA}{\hskip 0.5\mylenA\usebox\myboxB}%
    \fi}
\makeatother

\xpatchcmd{\proof}{\hskip\labelsep}{\hskip3.75\labelsep}{}{}

\pagestyle{plain}

\title{\fontsize{20}{28}\selectfont Private Information Retrieval with Private Coded Side Information: The Multi-Server Case}

\author{Fatemeh Kazemi, Esmaeil Karimi, Anoosheh Heidarzadeh, and Alex Sprintson\thanks{The authors are with the Department of Electrical and Computer Engineering, Texas A\&M University, College Station, TX 77843 USA (E-mail: \{fatemeh.kazemi, esmaeil.karimi, anoosheh, spalex\}@tamu.edu).}}

\maketitle 

\thispagestyle{plain}

\begin{abstract}

In this paper, we consider the multi-server setting of Private Information Retrieval with Private Coded Side Information (PIR-PCSI) problem. In this problem, there is a database of $K$ messages whose copies are replicated across $N$ servers, and there is a user who knows a random linear combination of a random subset of $M$ messages in the database as side information. The user wishes to download one message from the servers, while protecting the identities of both the demand message and the messages forming the side information. We assume that the servers know the number of messages forming the user's side information in advance, whereas the indices of these messages and their coefficients in the side information are not known to any of the servers \emph{a priori}. 

Our goal is to characterize (or derive a lower bound on) the capacity, i.e., the maximum achievable download rate, for the \mbox{following} two settings. {In the first setting, the set of messages forming the linear combination available to the user as side information, does not include the user's demanded message. For this setting, we show that the capacity is equal to $\left(1+{1}/{N}+\dots+{1}/{N^{K-M-1}}\right)^{-1}$.} {In the second setting, the demand message contributes to the linear combination available to the user as side information, i.e., the demand message is one of the \mbox{messages} that form the user's side information. For this setting, we show that the capacity is lower-bounded by $\left(1+{1}/{N}+\dots+{1}/{N^{K-M}}\right)^{-1}$.} {The proposed achievability schemes and proof techniques leverage ideas from both our recent methods proposed for the single-server PIR-PCSI problem as well as the techniques proposed by Sun and Jafar for multi-server private computation problem.}


\end{abstract}

\section{Introduction}
In the Private Information Retrieval (PIR) problem, a database of $K$ messages are replicated at $N$ servers. There is a user who wishes to retrieve a single or multiple messages belonging to the database while protecting the identity of the demanded message(s) from any individual server~\cite{chor1995private,sun2017capacity,banawan2018capacity,sun2017}. In order to retrieve the desired message(s), the user generates one query for each server. Upon receiving the user's query, each server will return an answer to the user, which depends on the stored messages and the received query. To ensure that each server learns nothing about the identity of the message(s) being retrieved by the user, in an information theoretic sense, each query must be marginally independent of the desired message(s) index.

In a single-server setting or a multi-server setting when all servers can fully collude, the user must download the whole database to achieve privacy in the information-theoretic sense~\cite{chor1995private}. However, when the user has some side information about the messages in the database \cite{kadhe2017private,heidarzadeh,li2018single,chen2017capacity,li2018converse,shariatpanahi2018multi,heidarzadeh2018capacity,heidarzadeh2019capacity,kazemi2019multi,tandon2017capacity,wei2018cache,wei2018fundamental,heidarzadeh2019single,kazemi2019single} or when the servers do not fully collude~{\cite{sun2017capacity,banawan2018capacity,sun2017}}, the privacy can be achieved in a more efficient manner in terms of minimizing the download cost (i.e., the amount of information downloaded
from the server(s)).  

For the PIR problem in the presence of side information, two different types of privacy can be considered: (i)~\emph{$W$-privacy}, which requires that the identity of the user's demanded message(s) be protected, and (ii)~\emph{$(W,S)$-privacy}, which requires that  the identities of both the user's demanded message(s) and the message(s) in the user's side information be protected. When the side information is a random subset of messages, the problem is referred to as \emph{PIR with Side Information (PIR-SI)} or \emph{PIR with Private Side Information (PIR-PSI)} where $W$-privacy or $(W,S)$-privacy is required, respectively. The single-server settings of these problems were studied in~\cite{kadhe2017private,heidarzadeh,li2018single}, and their multi-server settings were studied in~\cite{li2018converse,chen2017capacity,shariatpanahi2018multi}. In~\cite{heidarzadeh2018capacity} and \cite{heidarzadeh2019capacity}, we studied the single-server setting of a related problem in which the side information is a random linear combination of a random subset of messages. This problem is referred to as \textit{PIR with Coded Side Information (PIR-CSI)} or \textit{PIR with Private Coded Side Information (PIR-PCSI)} when $W$-privacy or $(W,S)$-privacy is required, respectively. Also,  in \cite{kazemi2019multi}, we recently studied the multi-server setting of the PIR-CSI problem.


In this work, we consider the multi-server setting of the PIR-PCSI problem. In this setting, a database of $K$ messages is replicated across $N$ servers, and a user, who knows a random linear combination of a random subset of $M$ messages in the database, wishes to obtain a message by sending queries to the servers. The goal is to design a scheme that protects the identities of both the user's demanded message and the messages forming the user's side information, while minimizes the download cost. The servers are assumed to know the number of messages contributing to the user's side information beforehand. However, the indices and the coefficients of the messages in the user's side information are not known to the servers in advance. The motivation for this type of side information comes from several practical scenarios. For instance, the side information could have been obtained in advance from a trusted server with limited knowledge about the database, or through overhearing in a wireless network, or from the information locally stored in the user's cache. 

\subsection{Main Contributions}

We consider two settings of the PIR-PCSI problem depending on whether the user's demanded message is one of the messages forming the user's side information or not. We characterize (or derive a lower bound on) the capacity of each setting, where the capacity is defined as the supremum of all achievable rates (i.e., the inverse of the normalized download cost).  
In the first setting, the message demanded by the user is not one of the messages forming the user's side information. For this setting, we prove that the capacity is equal to $\left(1+{1}/{N}+\dots+{1}/{N^{K-M-1}}\right)^{-1}$. Interestingly, the capacity in this setting is equal to the capacity of multi-server PIR-PSI problem~\cite{chen2017capacity} in which $M$ uncoded messages are available at the user as side information. This result shows that there is no loss in capacity due to restricting the user's side information to one random linear combination of $M$ messages, instead of $M$ uncoded messages. 

The converse proof readily follows from the fact that the capacity of this setting is upper-bounded by the capacity of the multi-server PIR-PSI which is given by $\left(1+{1}/{N}+\dots+{1}/{N^{K-M-1}}\right)^{-1}$ (see~\cite[Theorem~1]{chen2017capacity}). 

For the achievability proof, we devise a new protocol that builds upon two existing achievability schemes for two different problems: (i) the Private Computation (PC) scheme of~\cite{sun2018capacity} for multi-server private computation where a user wishes to privately retrieve one arbitrary linear combination of the messages replicated at multiple servers, and (ii) our Specialized GRS Code scheme proposed in~\cite{heidarzadeh2019capacity} for single-server PIR-PCSI. 

The main ideas of our achievability scheme are as follows. First, the user utilizes the Specialized GRS Code scheme of~\cite{heidarzadeh2019capacity} for single-server PIR-PCSI to construct $K-M$ independent super-messages which are some linearly independent combinations of the original messages, to play the role of the original messages in a multi-server private computation problem. Then, the user and the $N$ servers leverage the PC scheme of~\cite{sun2018capacity} for the constructed $K-M$ super-messages in such a way that the user can privately download one of $\binom{K}{M+1}$ linear combinations of the $K-M$ super-messages where the support of each linear combination is a distinct subset of $[K]$ of size $M+1$.


Additionally, for the setting wherein the demanded message is one of the messages forming the user's side information, we show that the capacity is lower-bounded by ${\left(1+{1}/{N}+\dots+{1}/{N^{K-M}}\right)^{-1}}$. The proof is based on a new achievability scheme that leverages the PC scheme of~\cite{sun2018capacity} for multi-server private computation, combined with our Modified Specialized GRS Code scheme proposed in~\cite{heidarzadeh2019capacity} for single-server PIR-PCSI. 

\section{Problem Formulation}\label{sec:SN}
We denote random variables by bold letters and their realizations by non-bold letters. 
 For a positive integer $i$, let $[i]\triangleq \{1,\dots,i\}$. Let $\mathbb{F}_q$ be a finite field for some prime $q$, and let $\mathbb{F}_q^{\times} \triangleq \mathbb{F}_q\setminus \{0\}$ be the multiplicative group of $\mathbb{F}_q$. Let $\mathbb{F}_{q^m}$ be an extension field of $\mathbb{F}_q$ for some integer $m\geq 1$. 


Consider $N$ non-colluding identical servers, each of which stores $K$ messages $X_1,\dots,X_K$, where $\mathbf{X}_i$ is independently and uniformly distributed over $\mathbb{F}_{q^m}$, i.e., for all $i\in [K]$, it holds that \[{H(\mathbf{X}_i) = L\triangleq m\log_2 q}, \quad \text{and}\quad  {H(\mathbf{X}_1,\dots,\mathbf{X}_K) = KL}.\] 

Suppose that there is a user that wishes to retrieve a message $X_W$ from the servers for some $W\in [K]$, and has a linear combination ${Y^{[S,C]}\triangleq \sum_{i\in S} c_i X_i}$ for some $S \triangleq \{i_1,\dots,i_M\}\in \mathcal{S}$ and some ${C \triangleq \{c_{i_1},\dots,c_{i_M}\} \in \mathcal{C}}$, where $\mathcal{S}$ is the set of all $M$-subsets of $[K]$, and $\mathcal{C}$ is the set of all length-$M$ sequences with elements from $\mathbb{F}^{\times}_q$. We call $W$ the \emph{demand index}, $X_W$ the \emph{demand}, $Y^{[S,C]}$ the \emph{side information}, $S$ the \emph{side information index set}, and $M$ the \emph{side information size}. 

We assume that $\mathbf{S}$ is uniformly distributed over $\mathcal{S}$, and that $\mathbf{C}$ is uniformly distributed over $\mathcal{C}$. Also, two different models for the conditional distribution of $\mathbf{W}$ given $\mathbf{S}=S$ are considered: 
\begin{itemize}
    \item {Model I}: $\mathbf{W}$ is uniformly distributed over $[K]\setminus S$;
    \item {Model II}: $\mathbf{W}$ is uniformly distributed over $S$.
\end{itemize}

It is assumed that $1\leq M \leq K-1$ and $2\leq M\leq K$ for Model I and Model II, respectively. Note that for both models it holds that $\mathbf{W}$ is uniformly distributed over $[K]$. 
We assume that no server knows the realizations of $\mathbf{S},\mathbf{C},\mathbf{W}$ in advance. In contrast, we assume that all servers know the considered model (i.e., whether $\mathbf{W}\not\in \mathbf{S}$ or $\mathbf{W}\in \mathbf{S}$), the side information size $M$,  the distributions of $\mathbf{S}$ and $\mathbf{C}$, and the conditional distribution of $\mathbf{W}$ given $\mathbf{S}$. 



For any $S$, $C$, $W$, in order to retrieve $X_W$, the user generates $N$ queries $Q_1^{[W,S,C]},\dots, Q_N^{[W,S,C]}$, and sends to the $n$-th server the query $Q_n^{[W,S,C]}$. Each query $Q_n^{[W,S,C]}$ is assumed to be a (potentially stochastic) function of $W$, $S$, $C$, and $Y^{[S,C]}$. Upon receiving the query $Q_n^{[W,S,C]}$, the $n$-th server responds to the user with an answer $A_n^{[W,S,C]}$. The answer $A_n^{[W,S,C]}$ is a (deterministic) function of the query $Q_n^{[W,S,C]}$ and the messages in $X_{[K]}\triangleq \{X_1,\dots,X_K\}$. Note that for all $n\in [N]$, it holds that

\[(\mathbf{W},\mathbf{S}) \rightarrow (\mathbf{Q}_n^{[\mathbf{W},\mathbf{S},\mathbf{C}]},\mathbf{X}_{[K]}) \rightarrow \mathbf{A}_n^{[\mathbf{W},\mathbf{S},\mathbf{C}]}\] forms a Markov chain, and \[H(\mathbf{A}_n^{[\mathbf{W},\mathbf{S},\mathbf{C}]}| \mathbf{Q}_n^{[\mathbf{W},\mathbf{S},\mathbf{C}]},\mathbf{X}_{[K]}) = 0.\] 

The answers $A_1^{[W,S,C]},\dots,A_N^{[W,S,C]}$ from all servers along with the side information $Y^{[S,C]}$ and the queries $Q_1^{[W,S,C]},\dots,Q_N^{[W,S,C]}$ must enable the user to retrieve the demand $X_W$, i.e.,
\[H(\mathbf{X}_{\mathbf{W}}| \mathbf{A}^{[\mathbf{W},\mathbf{S},\mathbf{C}]},\mathbf{Q}^{[\mathbf{W},\mathbf{S},\mathbf{C}]}, \mathbf{Y}^{[\mathbf{S},\mathbf{C}]},\mathbf{W},\mathbf{S},\mathbf{C})=0,\]
where $A^{[W,S,C]}\triangleq \{A_1^{[W,S,C]},\dots,A_N^{[W,S,C]}\}$, and $Q^{[W,S,C]}\triangleq \{Q_1^{[W,S,C]},\dots,Q_N^{[W,S,C]}\}$. This condition is referred to as the \emph{recoverability condition}.

In addition, the queries $Q_1^{[W,S,C]}, \dots,Q_N^{[W,S,C]}$ must not reveal any information about the user's demand index $W$ and side information index set $S$ to any server,
\[I(\mathbf{W},\mathbf{S}; \mathbf{Q}_n^{[\mathbf{W},\mathbf{S},\mathbf{C}]},\mathbf{A}_n^{[\mathbf{W},\mathbf{S},\mathbf{C}]},\mathbf{X}_{[K]})=0\quad \forall n \in [N].\] This condition is referred to as the \emph{$(W,S)$-privacy condition.}
 
For both models (Model I and Model II), we would like to design a protocol for generating queries $\{Q_1^{[W,S,C]},\dots,Q_N^{[W,S,C]}\}$ for any given $W,S,C$. The protocol also prescribes, for all $n \in [N]$, how the $n$-th server generates the answer $A_n^{[W,S,C]}$, given $Q_n^{[W,S,C]}$ and $X_{[K]}$. 

A protocol that satisfies both the $(W,S)$-privacy and recoverability conditions for all $W,S,C$ with $W\not\in S$ (or $W\in S$), is referred to as a \emph{PIR-PCSI--I} (or \emph{PIR-PCSI--II}) \emph{protocol}. The problem of designing a PIR-PCSI--I (or PIR-PCSI--II) protocol is referred to as the \emph{PIR-PCSI--I} (or \emph{PIR-PCSI--II}) \emph{problem}. 



The \emph{rate} of a {PIR-PCSI--I} or {PIR-PCSI--II} protocol is defined as the ratio of the entropy of a message, i.e., $L$, to the total entropy of answers from all servers, i.e., $H(\mathbf{A}^{[\mathbf{W},\mathbf{S},\mathbf{C}]})$.  

The \emph{capacity} of the PIR-PCSI--I (PIR-PCSI--II) problem is defined as the supremum of rates over all PIR-PCSI--I (PIR-PCSI--II) protocols. We denote by ${C_{(W,S)-\text{\it I}}}$ the capacity of the PIR-PCSI--I problem, and denote by ${C_{(W,S)-\text{\it II}}}$ the capacity of the PIR-PCSI--II problem.  

In this work, our goal is to characterize (or derive lower bounds on) ${C_{(W,S)-\text{\it I}}}$ and ${C_{(W,S)-\text{\it II}}}$, and to design {PIR-PCSI--I} and {PIR-PCSI--II} protocols that achieve the capacity (or the derived lower bound on the capacity). 

\section{Main Results}

In this section, we present our main results. Theorem~\ref{thm:PIRPCSI-I} characterizes the capacity of the PIR-PCSI--I problem ${C_{(W,S)-\text{\it I}}}$, and Theorem~\ref{thm:PIRPCSI-II} presents a lower-bound on the capacity of the PIR-PCSI--II problem ${C_{(W,S)-\text{\it II}}}$. The proofs of theorems~\ref{thm:PIRPCSI-I} and~\ref{thm:PIRPCSI-II} are given in sections~\ref{sec:PIR-PCSI-I} and~\ref{sec:PIR-PCSI-II}, respectively.

\begin{theorem}\label{thm:PIRPCSI-I}
The capacity of the PIR-PCSI--I problem with $N$ servers, $K$ messages, and side information size ${1\leq M \leq K-1}$ is given by
\[
{C_{(W,S)-\text{\it I}}}= \left(1+\frac{1}{N}+\dots+\frac{1}{N^{K-M-1}}\right)^{-1}.\]

\end{theorem}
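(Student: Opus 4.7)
The plan is to prove the converse and achievability separately. The converse is essentially free from existing results: any PIR-PCSI--I protocol can be used to solve the corresponding PIR-PSI problem in which the user holds $M$ uncoded messages $X_{i_1},\dots,X_{i_M}$ (the user simply forms the linear combination $Y^{[S,C]}=\sum_{j=1}^M c_{i_j} X_{i_j}$ and runs the PIR-PCSI--I protocol while discarding the extra uncoded information). Hence ${C_{(W,S)-\text{\it I}}}$ is upper-bounded by the multi-server PIR-PSI capacity $(1+1/N+\dots+1/N^{K-M-1})^{-1}$ established in~\cite[Theorem~1]{chen2017capacity}, and this matches the rate achieved by the construction below.

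For achievability, I will build the scheme in two layers. In the first layer, I invoke the Specialized GRS Code construction of~\cite{heidarzadeh2019capacity} to define $K-M$ ``super-messages'' $U_1,\dots,U_{K-M}$, each an $\mathbb{F}_q$-linear combination of $X_1,\dots,X_K$. The GRS construction will be chosen so that (a) the super-messages are $\mathbb{F}_q$-linearly independent and each has entropy $L$, so they behave as $K-M$ independent messages replicated across the $N$ servers; (b) a designated super-message, say $U_1$, equals $\beta X_W+Y^{[S,C]}$ for some $\beta\in\mathbb{F}_q^{\times}$, so that $U_1$ together with the user's side information determines $X_W$; and (c) after appropriate randomization of the GRS evaluation points and of the super-message indexing, the $(K-M)\times K$ coefficient matrix defining $\{U_j\}$ becomes marginally independent of the true pair $(W,S)$. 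Property (c) is the inner privacy guarantee of the construction.

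In the second layer, I will apply the capacity-achieving Private Computation protocol of~\cite{sun2018capacity} to the $K-M$ super-messages $U_1,\dots,U_{K-M}$. Since every server stores $X_{[K]}$ in full, it can evaluate any linear function of the super-messages that the PC scheme prescribes, so the PC scheme runs verbatim and achieves rate $(1+1/N+\dots+1/N^{K-M-1})^{-1}$ per super-message, which equals the target rate because each super-message has entropy $L$. The user requests the specific linear combination $U_1$ and, after decoding it, recovers $X_W=\beta^{-1}(U_1-Y^{[S,C]})$, establishing recoverability.

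The main obstacle will be the $(W,S)$-privacy analysis, where two independent privacy guarantees must be shown to compose. The outer guarantee, inherited from~\cite{sun2018capacity}, is that the PC queries hide which linear combination of super-messages is being computed. The inner guarantee, coming from property (c) above, is that the coefficient vectors defining the super-messages themselves, which enter each server's observation through the PC queries, do not leak $(W,S)$. I will make this precise by showing that for each server $n$, the joint distribution of $(\mathbf{Q}_n^{[\mathbf{W},\mathbf{S},\mathbf{C}]},\mathbf{A}_n^{[\mathbf{W},\mathbf{S},\mathbf{C}]},\mathbf{X}_{[K]})$ is the same for every admissible $(W,S)$, which immediately yields $I(\mathbf{W},\mathbf{S};\mathbf{Q}_n^{[\mathbf{W},\mathbf{S},\mathbf{C}]},\mathbf{A}_n^{[\mathbf{W},\mathbf{S},\mathbf{C}]},\mathbf{X}_{[K]})=0$. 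The technical heart of the proof is establishing property (c), that is, symmetrizing the Specialized GRS Code construction so that only the unordered $(M+1)$-subset $T=S\cup\{W\}$ can possibly appear in any server's view, and then verifying that the subsequent PC randomization washes out even this residual dependence.
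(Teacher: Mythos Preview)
Your converse is correct and matches the paper's: a user holding $M$ uncoded messages can always manufacture a random coded side information and run any PIR-PCSI--I protocol, so $C_{(W,S)-\text{\it I}}$ is bounded above by the PIR-PSI capacity of~\cite{chen2017capacity}.

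The achievability plan, however, has a genuine gap: properties~(b) and~(c) are mutually incompatible. If one of your $K-M$ super-messages satisfies $U_1=\beta X_W+Y^{[S,C]}$, then the row of the $(K{-}M)\times K$ coefficient matrix corresponding to $U_1$ has support exactly $W\cup S$. The servers must be given this matrix (otherwise they cannot form the super-messages the PC layer asks them to combine), so each server sees a row supported on $W\cup S$. Randomizing the GRS evaluation points or permuting the rows does not help: the server still learns that $W\cup S$ lies among at most $K-M$ specific $(M{+}1)$-subsets, whereas privacy requires it to remain uniform over all $\binom{K}{M+1}$ of them. The ``subsequent PC randomization'' you invoke only hides \emph{which} of the announced super-messages is being downloaded; it cannot erase information already present in the publicly transmitted coefficient matrix. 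So the inner and outer guarantees do not compose the way your last paragraph hopes.

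The paper avoids this by separating the two layers differently. The super-messages $\hat X_1,\dots,\hat X_{K-M}$ are taken to be the raw GRS rows $\underline u_i=[\beta_1\omega_1^{i-1},\dots,\beta_K\omega_K^{i-1}]$, each essentially full-support; with $\beta_j=c_j/p(\omega_j)$ for $j\in S$ and $\beta_j$ fresh uniform for $j\notin S$, every $\beta_j$ is marginally uniform on $\mathbb F_q^\times$, so this matrix \emph{is} independent of $(W,S)$. No single $\hat X_i$ equals $\beta X_W+Y^{[S,C]}$; that combination is instead one of $F=\binom{K}{M+1}$ \emph{functions} $Z_1,\dots,Z_F$ of the super-messages, one $Z_f$ per $(M{+}1)$-subset of $[K]$ (the GRS structure guarantees each such subset arises as the support of exactly one $Z_f$, up to scalar). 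The PC scheme of~\cite{sun2018capacity} is then run with these $F$ functions over the $K-M$ independent super-messages: the rate depends only on the number of independent messages, giving $(1+1/N+\dots+1/N^{K-M-1})^{-1}$, while PC privacy hides which $f\in[F]$ was requested and hence which $(M{+}1)$-subset is $W\cup S$. The ingredient missing from your plan is precisely this enlargement of the PC function list from $K-M$ to $\binom{K}{M+1}$; without it, $(W,S)$-privacy cannot be established.
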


Interestingly, this result indicates that the capacity of multi-server PIR-PCSI--I, i.e., ${C_{(W,S)-\text{\it I}}}$, is equal to the capacity of the multi-server PIR-PSI~\cite{chen2017capacity} where $M$ uncoded messages are available at the user as side information. Note that having only a random linear combination of $M$ messages as side information instead of $M$ uncoded messages, cannot increase the capacity which implies the converse. Thus, to complete the proof of Theorem~\ref{thm:PIRPCSI-I}, we only need to prove the achievability which is presented in Section~\ref{sec:PIR-PCSI-I}.
Notably, our results show that having only one random linear combination of messages instead of multiple uncoded messages does not decrease the capacity, either. 

\begin{theorem}\label{thm:PIRPCSI-II}
The capacity of the PIR-PCSI--II problem with $N$ servers, $K$ messages, and side information size ${2 \leq M\leq K}$ is lower-bounded by
\[
{C_{(W,S)-\text{\it II}}} \geq \left(1+\frac{1}{N}+\dots+\frac{1}{N^{K-M}}\right)^{-1}.
\]

\end{theorem}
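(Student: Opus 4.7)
The plan is to construct an achievability scheme that mirrors the structure of the proof of Theorem~\ref{thm:PIRPCSI-I} but substitutes the Modified Specialized GRS Code scheme of~\cite{heidarzadeh2019capacity} for the Specialized GRS Code scheme, and then runs the Sun--Jafar Private Computation scheme of~\cite{sun2018capacity} on $K-M+1$ super-messages rather than $K-M$. The key observation motivating $K-M+1$ is that in Model~II the demand index $W$ lies inside the side-information support $S$, so the effective ``new information'' the user must extract to recover $X_W$ from the side information $Y^{[S,C]}$ lives in a space of dimension $K-(M-1)=K-M+1$. Plugging this value into the $n$-message Sun--Jafar rate $(1+1/N+\dots+1/N^{n-1})^{-1}$ with $n=K-M+1$ yields precisely the claimed lower bound.

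First, I would apply the Modified Specialized GRS Code construction of~\cite{heidarzadeh2019capacity} to build a full-rank linear map from $(X_1,\dots,X_K)$ into $K-M+1$ \emph{super-messages} $\tilde{X}_1,\dots,\tilde{X}_{K-M+1}$, designed so that for each admissible pair $(W,S)$ with $W\in S$ there exists a unique coefficient vector $(\alpha_1,\dots,\alpha_{K-M+1})$ with the property that $\sum_k \alpha_k \tilde{X}_k$ together with the user's side information $Y^{[S,C]}$ determines $X_W$. The GRS structure should ensure that, marginally over the randomness in $(\mathbf{S},\mathbf{C})$, the resulting coefficient vector is distributed in a manner that is invariant to $(\mathbf{W},\mathbf{S})$ from the servers' perspective.

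Second, I would treat the super-messages as the base messages in a multi-server Private Computation problem and run the Sun--Jafar PC protocol of~\cite{sun2018capacity} with $K-M+1$ messages across the $N$ servers to privately compute the single linear combination $\sum_k \alpha_k \tilde{X}_k$. The PC protocol achieves rate $(1+1/N+\dots+1/N^{K-M})^{-1}$ and requires only that the desired coefficient vector be an arbitrary (fixed) element of $\mathbb{F}_{q^m}^{K-M+1}$, which the previous step supplies. Recoverability then follows by combining the output of the PC protocol with $Y^{[S,C]}$ using the linear relation baked into the GRS code.

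The main obstacle is the $(W,S)$-privacy verification. I would have to show that the induced distribution of the coefficient vector $(\alpha_1,\dots,\alpha_{K-M+1})$, combined with the Sun--Jafar query construction, renders each server's view statistically independent of $(\mathbf{W},\mathbf{S})$. Unlike Model~I, where $W\notin S$ automatically gives $|S\cup\{W\}|=M+1$ and the Specialized GRS Code scheme cleanly distributes one of $\binom{K}{M+1}$ targets, here the constraint $W\in S$ collapses the support to size $M$ and introduces an extra degree of freedom (the choice of which element of $S$ is the demand) that the Modified Specialized GRS Code must absorb. Confirming that this extra freedom is exactly what permits $K-M+1$ super-messages (rather than $K-M$) to yield a $(W,S)$-private scheme, and carefully composing the privacy of the Sun--Jafar step with that of the GRS layer, will be the technically delicate part of the argument.
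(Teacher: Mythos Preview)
Your proposal is correct and matches the paper's approach essentially line for line: the paper's Multi-Server PIR-PCSI--II protocol is obtained from the PIR-PCSI--I protocol by replacing the Specialized GRS Code with the Modified Specialized GRS Code of~\cite{heidarzadeh2019capacity} in Step~1 (yielding $r=K-M+1$ super-messages) and then running Steps~2--4 unchanged with $M\mapsto M-1$, so that the Sun--Jafar PC scheme over $K-M+1$ super-messages delivers rate $\left(1+1/N+\dots+1/N^{K-M}\right)^{-1}$. The only refinement worth noting is that in the paper the PC layer is not asked to retrieve an arbitrary coefficient vector but one of exactly $F=\binom{K}{M}$ predetermined functions $Z_f$, each supported on a distinct $M$-subset of $[K]$; the $(W,S)$-privacy argument (deferred in the paper to the proof of Lemma~\ref{lemma1}) hinges on this one-to-one correspondence between functions and $M$-subsets together with the fact that the publicly transmitted GRS coefficient vectors $\underline{u}_1,\dots,\underline{u}_r$ are identically distributed for every $(W,S)$, which is precisely the ``invariance'' you flag as the delicate part.
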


This result is interesting because it shows that the lower-bound on the capacity of the multi-server PIR-PCSI--II is the same as the capacity of multi-server PIR-SI when the size of side information is $M-1$. That is, 
having a side information which is only a random linear combination of $M$ messages including the demanded message would be at least as effective as knowing $M-1$ messages separately in terms of minimizing the download cost. 
For the proof, we construct a {PIR-PCSI--II} protocol that achieves the capacity lower-bound of Theorem~\ref{thm:PIRPCSI-II}. It should be noted that the tightness of this lower bound remains open in general. 



\section{The ~PIR-PCSI-I ~Problem}\label{sec:PIR-PCSI-I}

In this section, we complete the proof of Theorem~\ref{thm:PIRPCSI-I} by proposing an achievability scheme for arbitrary $N$, ${K\geq 1}$ and ${0\leq M\leq K-1}$ that achieves the rate $\left(1+{1}/{N}+\dots+{1}/{N^{K-M-1}}\right)^{-1}$. The proposed protocol, referred to as the \emph{Multi-Server PIR-PCSI--I protocol}, is a non-trivial combination of the Specialized GRS Code scheme of~\cite{heidarzadeh2019capacity} for single-server PIR-PCSI problem and the Private Computation (PC) scheme of~\cite{sun2018capacity} for multi-server private computation problem. 

For the proposed protocol, we assume that $q\geq K$, and each message $X_i$ consists of $m= N^{\binom{K}{M+1}}$ symbols over $\mathbb{F}_q$.\vspace{0.2cm}


\textbf{Multi-Server PIR-PCSI--I protocol:} The protocol consists of the following five steps:

\textit{\textbf{Step 1:}} The user utilizes the Specialized GRS Code protocol proposed in \cite{heidarzadeh2019capacity} to first construct a polynomial ${p(x) = \sum_{i=0}^{K-M-1} p_i x^i \triangleq \prod_{i\not\in S\cup W} (x-\omega_i)}$ where $\omega_1,\dots,\omega_K$ are $K$ arbitrarily chosen distinct elements from $\mathbb{F}_q$, and then construct $r \triangleq K-M$ vectors $\underline{u}_1,\dots,\underline{u}_{r}$, each of length $K$, such that ${\underline{u}_{i} = [\beta_{1}\omega_1^{i-1} ,\dots,\beta_{K}\omega_K^{i-1}]}$ for $i\in [r]$, where $\beta_j=\frac{c_j}{p(\omega_j)}$ for $j\in S$, and $\beta_j$ is a randomly chosen element from $\mathbb{F}_q^{\times}$ for $j\not\in S$. 



\textit{\textbf{Step 2:}} Let $\hat{X}_i\triangleq \sum_{j=1}^{K} \beta_j\omega_j^{i-1} X_{j}$ for $i\in [r]$. Each $\hat{X}_i$ is referred to as a \emph{super-message}. Note that the vector $\underline{u}_i$ (constructed in Step~1) is the vector of coefficients of the messages $\{X_i\}_{i\in [K]}$ in the super-message $\hat{X}_i$. Let $F\triangleq\binom{K}{M+1}$, and let $J_1,J_2,\dots,J_F$ be the collection of all $(M+1)$-subsets of $[K]$ in a lexicographical order. The structure of the Specialized GRS Code protocol \cite{heidarzadeh2019capacity} ensures that for each $J_f$, $f \in [F]$, there exist exactly $q-1$ linear combinations $Z^1_f,Z^2_f,\dots,Z^{q-1}_f$ of the messages $\{X_i\}_{i \in J_f}$ with (non-zero) coefficients from $\mathbb{F}^{\times}_q$, such that for every $k\in [q-1]$, $Z^k_f$ can be written as a linear combination of the super-messages $\hat{X}_1,\dots,\hat{X}_r$. Let $\underline{v}^{k}_f\triangleq [v^{k}_{f,1},\dots,v^{k}_{f,r}]$ be a vector of length $r$ such that $Z^{k}_f = \sum_{i=1}^{r} v^{k}_{f,i} \hat{X}_i$. Note that, for each $f \in [F]$, $Z^1_f,Z^2_f,\dots,Z^{q-1}_f$ are the same up to a scalar multiple, i.e., for each $k\in [q-1]$, $Z^{k}_f = \alpha_k Z^{1}_f$, or equivalently, $\underline{v}^{k}_f = \alpha_k \underline{v}^{1}_f$, for some distinct $\alpha_k\in \mathbb{F}^{\times}_q$. 
For each $f\in [F]$, let $i_f\triangleq \min(J_f)$. Note also that for every $f\in [F]$, there exists a unique $k_f\in [q-1]$ such that the coefficient of the message $X_{i_f}$ in the linear combination $Z^{k_f}_{f}$ is equal to $1$. The user then constructs $F$ vectors $\underline{v}_1,\dots,\underline{v}_F$, each of length $r$, such that $\underline{v}_f = \underline{v}^{k_f}_f$. (Note that the above procedure dictates a specific choice of the coefficient vectors $\underline{v}_f$. However, for each $f\in [F]$, the vector $\underline{v}_f$ can be chosen arbitrarily from the set of vectors $\{\underline{v}^{k}_f\}_{k\in [q-1]}$.) Let $Z_f\triangleq Z^{k_f}_f$ for $f\in [F]$. Each $Z_f$ is referred to as a (linear) \emph{function}. Note that $\underline{v}_f$ is the vector of coefficients of the super-messages $\{\hat{X}_i\}_{i\in [r]}$ in the function $Z_f$. 

\textit{\textbf{Step 3:}} The user sends to all servers the vectors $\underline{u}_1,\dots,\underline{u}_{r}$ (associated with the super-messages $\hat{X}_1,\dots,\hat{X}_r$), and the vectors $\underline{v}_1,\dots,\underline{v}_{F}$ (associated with the functions $Z_1,\dots,Z_F$). It is noteworthy that the user needs only to send the vectors $\{\underline{u}_i\}_{i\in [r]}$ to all servers, and each server can construct the vectors $\{\underline{v}_f\}_{f\in [F]}$ by using $\{\underline{u}_i\}_{i\in [r]}$ (according to the procedure described in Step~2).

\textit{\textbf{Step 4:}} The user and the servers leverage the PC scheme of~\cite{sun2018capacity} with $r$ (independent) messages and $F$ (linear) functions of these messages in order for the user to privately retrieve one of these functions. In particular, the $r=K-M$ super-messages $\{\hat{X}_i\}_{i\in [r]}$ and the $F$ functions $\{Z_f\}_{f\in [F]}$ play the role of the original messages and the functions in the PC scheme, respectively, and the user is interested in retrieving the function $Z_{f^{*}}$ privately, where $Z_{f^{*}}$ is an $\mathbb{F}^{\times}_q$-linear combination (i.e., a linear combination with non-zero coefficients only) of the messages $\{X_i\}_{i \in W\cup S}$. (By the construction, there exists one (and only one) function $Z_f$ among $Z_1,\dots,Z_F$ such that $Z_f$ is an $\mathbb{F}_q^{\times}$-linear combination of the messages $\{X_i\}_{i \in W\cup S}$.) To be more specific, each server first constructs the super-messages $\{\hat{X}_i\}_{i\in [r]}$ by using the coefficient vectors $\{\underline{u}_i\}_{i\in [r]}$ (defined in Step~3), and then constructs the functions $\{Z_f\}_{f\in [F]}$ by using the super-messages $\{\hat{X}_i\}_{i\in [r]}$ and the coefficient vectors $\{\underline{v}_f\}_{f\in [F]}$ (defined in Step~3). Note that each function $Z_f$ for $f\in [F]$ consists of $m=N^F$ $\mathbb{F}_q$-symbols where $N$ is the number of servers. Then, each server sends to the user $m({{1}/{N}+{1}/{N^2}+\dots+{1}/{N^{K-M}}})$ carefully designed linear combinations of all $\mathbb{F}_q$-symbols associated with all functions $\{Z_f\}_{f\in [F]}$. The details of the design of the user's query to each server as well as the linear combinations transmitted by each server (which also depend on the query of the user) can be found in~\cite[Section~4]{sun2018capacity}.

\textbf{\textbf{Example 1.}} (Multi-Server PIR-PCSI--I protocol) Assume that there are $N = 2$ servers, $K=4$ messages from $\mathbb{F}_{5^{16}}$, and $M=2$. Note that each message consists of ${m= N^{\binom{K}{M+1}}= 16}$ symbols from $\mathbb{F}_{5}$. Suppose that the user demands the message $X_1$ and has a coded side information $Y={X_2+X_3}$, i.e., $W = 1$, $S= \{2,3\}$, and ${C = \{1,1\}}$ (i.e., ${c_{2}=1, c_{3}=1}$). 

First, the user picks ${K=4}$ distinct elements $\omega_1,\dots,\omega_4$ from $\mathbb{F}_5$. Suppose that the user chooses $\omega_1=0$, $\omega_2=1$, $\omega_3=2$, $\omega_4=3$. Then, the user constructs the polynomial ${{p(x) = \prod_{i\not\in S\cup W} (x-\omega_i)}=x-\omega_4=x-3}$. The user then computes $\beta_j$ for $j\in S$, i.e., $\beta_2$ and $\beta_3$, by setting ${\beta_2=\frac{c_2}{p(\omega_2)}=2}$ and ${\beta_3=\frac{c_3}{p(\omega_3)}=4}$, and chooses $\beta_j$ for $j\not\in S$, i.e., $\beta_1$ and $\beta_4$, at random (from $\mathbb{F}^{\times}_5$). Assume that the user chooses $\beta_1=1$ and $\beta_4=2$. Then, the user constructs ${r=K-M=2}$ vectors $\underline{u}_1$ and $\underline{u}_2$, each of length ${K=4}$, such that ${\underline{u}_i=[\beta_1\omega_1^{i-1},\dots,\beta_K\omega_K^{i-1}]}$ for $i\in \{1,2\}$. That is, the user constructs ${\underline{u}_1=[1,2,4,2]}$ and ${\underline{u}_2=[0,2,3,1]}$. For set $J_1=\{1,2,3\}$, there exist exactly ${q-1=4}$ vectors ${\underline{v}^{k}_1 = [k,3k]}$ for ${k\in \{1,\dots,4\}}$ such that ${k\underline{u}_1+3k\underline{u}_2=k[1,3,3,0]}$.

It should be noted that there exists no other vector ${\underline{v}=[v_1,v_2]}$ such that the support of the vector ${v_1\underline{u}_1 + v_2\underline{u}_2}$ is ${J_1 = \{1,2,3\}}$. Note that the coefficient of the message ${X_{i_1}=X_1}$ (i.e., ${i_1=\min(J_1)=1}$) in the \mbox{function} ${Z_1}$ is equal to $1$ when ${k=1}$. Thus, the user constructs the vector ${\underline{v}_1=\underline{v}^{1}_1=[1,3]}$. Similarly, the user \mbox{constructs} the vectors ${\underline{v}_2=[1,2]}$, ${\underline{v}_3=[1,4]}$ and ${\underline{v}_4=[0,3]}$. Then, the user sends to all servers the vectors $\underline{u}_1$ and $\underline{u}_{2}$ (associated with the super-messages $\hat{X}_1$ and $\hat{X}_2$), and the vectors ${\underline{v}_1,\dots,\underline{v}_{4}}$ (associated with the functions ${Z_1,\dots,Z_4}$). Using the coefficient vectors $\underline{u}_1$ and $\underline{u}_{2}$, each server first constructs the two super-messages ${\hat{X}_1=X_1+2X_2+4X_3+2X_4}$ and ${\hat{X}_2=2X_2+3X_3+X_4}$. Then, it constructs the functions ${Z_1,\dots,Z_4}$ using the super-messages $\hat{X}_1$ and $\hat{X}_2$ and the coefficient vectors ${\underline{v}_1,\dots,\underline{v}_{4}}$ as follows: 
\[
\begin{array}{lcl}
Z_1=\hat{X}_1+3\hat{X}_2=X_1+3X_2+3X_3\\
Z_2=\hat{X}_1+2\hat{X}_2=X_1+X_2+4X_4\\
Z_3=\hat{X}_1+4\hat{X}_2=X_1+X_3+X_4\\
Z_4=3\hat{X}_2=X_2+4X_3+3X_4\\   
\end{array}
\]

Finally, the user and the servers apply the PC scheme of~\cite{sun2018capacity} for two super-messages $\hat{X}_1$, $\hat{X}_2$ in order for the user to privately retrieve the function ${Z_1}$. (Note that among the functions $Z_1,\dots,Z_4$, only $Z_1$ is an $\mathbb{F}^{\times}_5$-linear combination of the messages $\{X_i\}_{i\in W\cup S} = \{X_1,X_2,X_3\}$.) The details of the PC scheme for this example are as follows. Let $\pi: [16]\rightarrow [16]$ be a randomly chosen permutation. Let $u_f(i)\triangleq\sigma_i Z_f(\pi(i))$ for $f \in [4]$ and ${i \in [16]}$, where $Z_f(\pi(i))$ is the $\pi(i)$-th $\mathbb{F}_5$-symbol of $Z_f$, and $\sigma_i$ is a randomly chosen element from $\{-1,+1\}$. 
For simplifying the notation, let $(a_i,b_i,c_i,d_i)=(u_1(i),u_2(i),u_3(i),u_4(i))$ for all ${i \in [16]}$. The user then queries $15$ carefully designed linear combinations of the symbols $\{\{a_i\}_{i\in [16]},\{b_i\}_{i\in [16]},\{c_i\}_{i\in [16]},\{d_i\}_{i\in [16]}\}$, as given in Table~\ref{table1} \cite{sun2018capacity}, from each of the servers (S1 and S2). 

As shown in~\cite{sun2018capacity}, among the $15$ symbols queried from S1 (or S2), based on the information obtained from S2 (or S1), $3$ symbols are redundant. For instance, consider the $15$ symbols queried from S1. (Similar observations can be made regarding the queries from S2.) Among the $4$ symbols $\{a_1,b_1,c_1,d_1\}$, any $2$ symbols suffice to recover the other $2$ symbols. For example, $c_1$ and $d_1$ can be obtained from $a_1$ and $b_1$. (Note that $Z_3$ and $Z_4$ can be written as a linear combination of $Z_1$ and $Z_2$.) Thus, the server S1 needs to send two arbitrary symbols from $\{a_1,b_1,c_1,d_1\}$. In addition, given any $2$ symbols from $\{a_2,b_2,c_2,d_2\}$, any $5$ symbols among the $6$ symbols $\{{a_3-b_2},{a_4-c_2},{a_5-d_2},{b_4-c_3},{b_5-d_3},{c_5-d_4}\}$ queried from S1 would suffice to recover the remaining symbol. For example, ${c_5-d_4}$ can be obtained from the symbols $\{a_3-b_2,{a_4-c_2},{a_5-d_2},{b_4-c_3},{b_5-d_3},b_2,d_2\}$ (for details, see~\cite[Section~5.1]{sun2018capacity}). 
Thus, each of the servers S1 and S2 needs to send to the user only $12$ symbols. In particular, S1 transmits $2$ arbitrary symbols from $\{a_1,b_1,c_1,d_1\}$, $5$ arbitrary symbols from $\{{a_3-b_2},{a_4-c_2},{a_5-d_2},{b_4-c_3},{b_5-d_3},{c_5-d_4}\}$, and the $4$ symbols $\{a_9-b_7+c_6,a_{10}-b_8+d_6,a_{11}-c_8+d_7,b_{11}-c_{10}+d_9\}$, and the symbol $\{a_{15}-b_{14}+c_{13}-d_{12}\}$; and S2 transmits $2$ arbitrary symbols from $\{a_2,b_2,c_2,d_2\}$, $5$ arbitrary symbols from $\{a_6-b_1,{a_7-c_1},{a_8-d_1},{b_7-c_6},{b_8-d_6},{c_8-d_7}\}$, and the $4$ symbols $\{a_{12}-b_4+c_3,a_{13}-b_5+d_3,a_{14}-c_5+d_4,b_{14}-c_{13}+d_{12}\}$, and the symbol $\{a_{16}-b_{11}+c_{10}-d_{9}\}$. 

\begin{table}[t!]
    \caption{The queries of the PC protocol for $N=2$, $2$ super-messages, $F=4$, when the user demands ${Z_1}$\cite{sun2018capacity}.}
    \label{table1}
    \centering
    \scalebox{1.25}{
\begin{tabular}{ |c|c| } 
 \hline 
 S1 & S2 \\ 
 \hline
 $a_{1},b_{1},c_{1},d_{1}$ & $a_{2}, b_{2},c_{2},d_{2}$ \\ 
 \hline
 $a_{3}-b_{2}$ & $a_{6}-b_{1}$\\ 
 $a_{4}-c_{2}$ & $a_{7}-c_{1}$\\ 
 $a_{5}-d_{2}$ & $a_{8}-d_{1}$ \\ 
  $b_{4}-c_{3}$ & $b_{7}-c_{6}$\\ 
  $b_{5}-d_{3}$ & $b_{8}-d_{6}$\\ 
  $c_{5}-d_{4}$ & $c_{8}-d_{7}$\\ 
 \hline
  $a_{9}-b_{7}+c_{6}$ & $a_{12}-b_{4}+c_{3}$\\ 
  $a_{10}-b_{8}+d_{6}$ & $a_{13}-b_{5}+d_{3}$\\ 
  $a_{11}-c_{8}+d_{7}$ & $a_{14}-c_{5}+d_{4}$\\ 
  $b_{11}-c_{10}+d_{9}$ & $b_{14}-c_{13}+d_{12}$\\ 
  \hline
  $a_{15}-b_{14}+c_{13}-d_{12}$ & $a_{16}-b_{11}+c_{10}-d_{9}$\\ 
  \hline
\end{tabular}}
\end{table}

From the answers by the servers, the user obtains all $16$ symbols $a_1,\dots,a_{16}$, and accordingly, all $16$ symbols of $Z_1$. (Note that $a_i = u_1(i)=\sigma_i Z_1(\pi(i))$ for $i\in [16]$.) From $Z_1$ ($=X_1+3X_2+3X_3$), the user can decode the desired message $X_1$ by subtracting off the contribution of their side information $X_2+X_3$.

In order to retrieve $X_1$ which consists of $16$ symbols (over $\mathbb{F}_5$), according to the proposed protocol, the user downloads $24$ symbols (over $\mathbb{F}_5$) from both servers, and hence the rate of the proposed protocol is $16/24=2/3$.

Note that for every $3$-subset $\{X_{j_1},X_{j_2},X_{j_3}\}$ of the messages $\{X_i\}_{i\in [4]}$, in the proposed protocol there exists one (and only one) linear combination $Z_f$ for some $f \in [4]$ of the messages $X_{j_1},X_{j_2},X_{j_3}$. On the other hand, the PC scheme guarantees that no server can obtain any information about the index ($f$) of the linear combination $Z_f$ being requested by the user. Thus, the proposed scheme satisfies the $(W,S)$-privacy condition, as desired.  


\begin{lemma}\label{lemma1}
The Multi-Server PIR-PCSI--I protocol satisfies the recoverability and (W,S)-privacy conditions, and achieves the rate ${C_{(W,S)-\text{\it I}}}= \left(1+{1}/{N}+\dots+{1}/{N^{K-M-1}}\right)^{-1}$.
\end{lemma}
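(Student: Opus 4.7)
The plan is to break the lemma into three sub-claims (recoverability, $(W,S)$-privacy, and the rate) and reduce each to a corresponding guarantee of the Specialized GRS Code scheme of~\cite{heidarzadeh2019capacity} or the PC scheme of~\cite{sun2018capacity}.

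For recoverability, I would first observe that since the $\omega_j$ are distinct and the $\beta_j$ are all nonzero, the $r\times K$ matrix whose rows are the $\underline{u}_i$'s has full row rank (it is a scaled Vandermonde matrix), so the super-messages $\hat{X}_1,\dots,\hat{X}_r$ are linearly independent and can legitimately play the role of $r$ independent messages in the PC scheme. By the decodability of the PC scheme applied to these super-messages and the functions $Z_1,\dots,Z_F$, the user obtains $Z_{f^{*}}$, where $f^{*}$ is the unique index such that $J_{f^{*}}=W\cup S$. Writing $Z_{f^{*}}=\sum_i v_{f^{*},i}\hat{X}_i=\sum_j q(\omega_j)\beta_j X_j$ with $q(x)\triangleq \sum_i v_{f^{*},i}x^{i-1}$ and noting that $q$ must vanish on the $K-M-1$ points $\{\omega_j:j\notin W\cup S\}$, one concludes that $q$ equals a nonzero scalar multiple $\mu$ of $p(x)$. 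Combined with the choice $\beta_j=c_j/p(\omega_j)$ for $j\in S$, this yields $Z_{f^{*}}=\mu Y^{[S,C]}+\mu p(\omega_W)\beta_W X_W$; since the normalization of $\underline{v}_{f^{*}}$ imposed in Step~2 fixes a known value of $\mu$, the user can subtract the appropriate scalar multiple of $Y^{[S,C]}$ from $Z_{f^{*}}$ to isolate $X_W$.

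The main obstacle is the $(W,S)$-privacy argument, and the key step I would establish is that the joint distribution of the coefficient vectors $\{\underline{u}_i\}_{i\in [r]}$ (which deterministically determine $\{\underline{v}_f\}_{f\in [F]}$) sent in Step~3 is identical for every realization of $(W,S,C)$. This reduces to showing that the random vector $(\beta_1,\dots,\beta_K)$ is uniform over $(\mathbb{F}_q^\times)^K$ regardless of $(W,S,C)$: for $j\in S$ the bijection $c_j\mapsto c_j/p(\omega_j)$ on $\mathbb{F}_q^\times$ maps the uniform $\mathbf{c}_j$ to a uniform element of $\mathbb{F}_q^\times$, while for $j\notin S$ the protocol draws $\beta_j$ independently and uniformly at random from $\mathbb{F}_q^\times$. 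Once this marginal uniformity is in hand, the privacy guarantee of the PC scheme ensures that the Step~4 transcript (queries plus answers), conditioned on the coefficient vectors, is independent of the index $f^{*}$ of the requested function; chaining the two independences shows that the full view of any single server is independent of $(W,S)$, which is exactly the $(W,S)$-privacy condition.

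For the rate, the PC scheme applied to $r=K-M$ independent messages and $F$ linear functions, each of length $m=N^F$ over $\mathbb{F}_q$, has each server transmit $m\bigl(1/N+1/N^2+\dots+1/N^{K-M}\bigr)$ symbols, as argued in~\cite[Section~4]{sun2018capacity}. Summing over the $N$ servers and converting to bits, the total download equals $L\bigl(1+1/N+\dots+1/N^{K-M-1}\bigr)$, hence $L/H(\mathbf{A}^{[\mathbf{W},\mathbf{S},\mathbf{C}]})=\bigl(1+1/N+\dots+1/N^{K-M-1}\bigr)^{-1}$, which matches the capacity value in Theorem~\ref{thm:PIRPCSI-I} and completes the proof.
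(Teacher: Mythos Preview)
Your proposal is correct, with one phrasing slip: you assert that $(\beta_1,\dots,\beta_K)$ is uniform over $(\mathbb{F}_q^\times)^K$ ``for every realization of $(W,S,C)$,'' but then immediately use the randomness of $\mathbf{c}_j$; for a fixed $C$ the coordinates $\beta_j=c_j/p(\omega_j)$, $j\in S$, are deterministic, so the claim is false as stated. What you actually show---and what suffices for $I(\mathbf{W},\mathbf{S};\mathbf{Q}_n,\mathbf{A}_n,\mathbf{X}_{[K]})=0$---is uniformity conditional on $(\mathbf{W},\mathbf{S})=(W,S)$ alone, after averaging over the uniform $\mathbf{C}$ and the protocol's internal randomness.

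Your privacy argument takes a genuinely different route from the paper's. The paper computes the posterior $\Pr(\mathbf{W}=W',\mathbf{S}=S'\mid \mathbf{Q}_n=Q_n)$ by factoring through the events $\{\mathbf{W}\in\mathcal{Z}_f\}$ (with $\mathcal{Z}_f$ the support of $Z_f$) and checking that it equals the prior $M!(K-M-1)!/K!$. You instead give a structural decomposition: first show the Step~3 output (equivalently the vector $\beta$) is independent of $(\mathbf{W},\mathbf{S})$, then invoke PC privacy to conclude the Step~4 transcript is independent of $f^{*}$ conditionally on that output, and chain the two independences. Your route is more modular and makes explicit the key fact---the uniformity of $\beta$ irrespective of $(W,S)$---on which the paper's intermediate assertions (that each $\mathcal{Z}_f$ is equally likely, and that each $W'\in\mathcal{Z}_f$ is equally likely to be the demand) silently rely. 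Your recoverability argument, identifying $q(x)=\mu\,p(x)$ by degree and root counting, is likewise more detailed than the paper's one-line appeal to Step~4.
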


\begin{proof}

Since the messages $\mathbf{X}_{[K]}$ are uniformly and independently distributed over $\mathbb{F}_{q^m}$, and
$\{\hat{X}_1,\dots,\hat{X}_r\}$ are linearly independent combinations of the messages in $X_{[K]}$, thus $\{\hat{\mathbf{X}}_1,\dots,\hat{\mathbf{X}}_r\}$ are uniformly and independently distributed over $\mathbb{F}_{q^m}$ as well, i.e., ${H(\hat{\mathbf{X}}_1) = \dots = H(\hat{\mathbf{X}}_r) = m\log q=L}$. Hence, the rate of the Multi-Server PIR-PCSI--I protocol is the same as the rate of the PC protocol for $N$ servers and ${K-M}$ messages, which is given by $\left(1+{1}/{N}+\dots+{1}/{N^{K-M-1}}\right)^{-1}$ (see \cite[Theorem~1]{sun2018capacity}).

From the step $4$ of the Multi-Server PIR-PCSI--I protocol, it is evident that the recoverability condition is satisfied. 
The proof of the $(W,S)$-privacy of the proposed protocol is as follows. 
The PC protocol protects the privacy of the function (linear combination) requested by the user. That is, given the query, no server can obtain any information about the index of the function requested by the user. Consider an arbitrary server $n\in [N]$, and an arbitrary query $Q_n$ to server $n$, generated by the proposed protocol. Thus, given ${\mathbf{Q}^{[\mathbf{W},\mathbf{S},\mathbf{C}]}_n = Q_n}$, from the perspective of server $n$, every function ${Z_f}$ for ${f \in [F]}$ is equally likely to include the demanded message. We denote the support of $Z_f$ by $\mathcal{Z}_f$, i.e., $\mathcal{Z}_f$ is the set of all indices $i\in [K]$ such that $X_i$ has a non-zero coefficient in the linear combination $Z_f$. Thus, for all $f\in [F]$, we have 
\begin{equation}\label{eq:P1}
\Pr(\mathbf{W}\in \mathcal{Z}_f|\mathbf{Q}^{[\mathbf{W},\mathbf{S},\mathbf{C}]}_n = Q_n) = \frac{1}{\binom{K}{M+1}},
\end{equation} noting that $F=\binom{K}{M+1}$. 
Note that any given index $W'\in [K]$ is in the support of exactly $\binom{K-1}{M}$ functions $Z_f$, $f\in [F]$. 
For any given $f\in [F]$, given $\mathbf{Q}^{[\mathbf{W},\mathbf{S},\mathbf{C}]}_n = Q_n$ and $\mathbf{W}\in \mathcal{Z}_f$, from the perspective of server $n$, every index $W'\in \mathcal{Z}_f$ is equally likely to be the demand index. That is, for all $f\in [F]$, we have 
\begin{equation}\label{eq:P2}
\Pr(\mathbf{W}=W'|\mathbf{Q}^{[\mathbf{W},\mathbf{S},\mathbf{C}]}_n = Q_n,\mathbf{W}\in \mathcal{Z}_f) = \begin{cases} \frac{1}{M+1}, & W'\in \mathcal{Z}_f,\\ 0, & \text{otherwise}.\end{cases}
\end{equation} Furthermore, for any given $f\in [F]$ and $W'\in \mathcal{Z}_f$, we have 
\begin{equation}\label{eq:P3}
\Pr(\mathbf{S}=S'|\mathbf{Q}^{[\mathbf{W},\mathbf{S},\mathbf{C}]}_n = Q_n,\mathbf{W}\in \mathcal{Z}_f,\mathbf{W} = W') = \begin{cases} 1, & S'= \mathcal{Z}_f\setminus \{W'\},\\ 0, & \text{otherwise}.\end{cases}
\end{equation} Consider arbitrary $W'\in [K]$ and $S'\subset [K]\setminus \{W'\}, |S'| = M$. Let $f'\in [F]$ be the (unique) index such that $\mathcal{Z}_{f'} = W'\cup S'$. It is easy to see that $\Pr(\mathbf{W}=W',\mathbf{S}=S',\mathbf{W}\in \mathcal{Z}_f|\mathbf{Q}^{[\mathbf{W},\mathbf{S},\mathbf{C}]}_n = Q_n) = 0$ for all $f\in [F], f\neq f'$. Thus, by using~\eqref{eq:P1}-\eqref{eq:P3}, we can write 
\begin{align}
&\Pr(\mathbf{W}=W',\mathbf{S}=S'|\mathbf{Q}^{[\mathbf{W},\mathbf{S},\mathbf{C}]}_n = Q_n)\nonumber \\  
& = \sum_{f\in [F]} \Pr(\mathbf{W}=W',\mathbf{S}=S',\mathbf{W}\in \mathcal{Z}_f|\mathbf{Q}^{[\mathbf{W},\mathbf{S},\mathbf{C}]}_n = Q_n)\nonumber \\
& = \Pr(\mathbf{W}=W',\mathbf{S}=S',\mathbf{W}\in \mathcal{Z}_{f'}|\mathbf{Q}^{[\mathbf{W},\mathbf{S},\mathbf{C}]}_n = Q_n)\nonumber \\
& = \Pr(\mathbf{W}\in \mathcal{Z}_{f'}|\mathbf{Q}^{[\mathbf{W},\mathbf{S},\mathbf{C}]}_n = Q_n)\times \Pr(\mathbf{W} = W'|\mathbf{Q}^{[\mathbf{W},\mathbf{S},\mathbf{C}]}_n = Q_n, \mathbf{W}\in \mathcal{Z}_{f'})\nonumber\\ & \quad \times \Pr(\mathbf{S} = S'|\mathbf{Q}^{[\mathbf{W},\mathbf{S},\mathbf{C}]}_n = Q_n, \mathbf{W}\in \mathcal{Z}_{f'},\mathbf{W} = W')\nonumber\\
& = \frac{1}{\binom{K}{M+1}}\times \frac{1}{M+1}\times 1 \nonumber \\ 
& = \frac{M! (K-M-1)!}{K!} \label{eq:P4}
\end{align} On the other hand, we have 
\begin{align}
&\Pr(\mathbf{W}= W',\mathbf{S}= S')
\nonumber \\ &= \Pr(\mathbf{W}= W') \times \Pr(\mathbf{S}= S'| \mathbf{W}= W') \nonumber \\ & =\frac{1}{K} \times \frac{1}{\binom{K-1}{M}} \nonumber\\
& = \frac{M! (K-M-1)!}{K!}. \label{eq:P5}
\end{align} From~\eqref{eq:P4} and~\eqref{eq:P5}, for any $W'\in [K]$ and $S'\subset [K]\setminus \{W'\}, |S'|=M$, we have \[\Pr(\mathbf{W}= W',\mathbf{S}= S'| \mathbf{Q}^{[\mathbf{W},\mathbf{S},\mathbf{C}]}_n=Q_n) = \Pr(\mathbf{W}= W',\mathbf{S}= S').\] This completes the proof of $(W,S)$-privacy of the proposed protocol.
\end{proof}

\section{The ~PIR-PCSI-II ~Problem}\label{sec:PIR-PCSI-II}

In this section, we prove the result of Theorem~\ref{thm:PIRPCSI-II} 
by constructing a PIR-PCSI--II protocol, referred to as the \emph{Multi-Server PIR-PCSI--II protocol}, for arbitrary $N$, ${K\geq 2}$ and ${2 \leq M\leq K}$ that achieves the rate ${\left(1+{1}/{N}+\dots+{1}/{N^{K-M}}\right)^{-1}}$. 

For the proposed protocol, we assume that $q\geq K$, and each message is comprised of ${m = N^{\binom{K}{M}}}$ symbols over $\mathbb{F}_q$.\vspace{0.2cm} 


\textbf{Multi-Server PIR-CSI--II protocol:} The protocol consists of four steps, where the steps 2-4 are the same as the steps 2-4 in the Multi-Server PIR-PCSI--I protocol, except that $M$ is replaced with $M-1$ everywhere. The step~1 of the proposed protocol is as follows: 

\textit{\textbf{Step 1:}} The user utilizes the Modified Specialized GRS Code protocol proposed in \cite{heidarzadeh2019capacity} to first construct a polynomial  ${p(x) = \sum_{i=0}^{K-M} p_i x^i \triangleq \prod_{i\not\in S} (x-\omega_i)}$ where $\omega_1,\dots,\omega_K$ are $K$ arbitrarily chosen distinct elements from $\mathbb{F}_q$, and then construct $r \triangleq K-M+1$ vectors ${\underline{u}_1,\dots,\underline{u}_{r}}$, each of length $K$, such that ${\underline{u}_{i} = [\beta_1\omega_1^{i-1} ,\dots,\beta_K\omega_K^{i-1}]}$ for $i\in [r]$, where $\beta_j=\frac{c_j}{p(\omega_j)}$ for $j\in S\setminus W$, $\beta_W=\frac{c}{p(\omega_W)}$ where $c$ is chosen uniformly at random from $\mathbb{F}^{\times}_q\setminus \{c_W\}$, and $\beta_j$ is a randomly chosen element from $\mathbb{F}_q^{\times}$ for $j\not\in S$. 

\begin{lemma}
The Multi-Server PIR-PCSI--II protocol satisfies the recoverability and (W,S)-privacy conditions, and achieves the rate $\left(1+{1}/{N}+\dots+{1}/{N^{K-M}}\right)^{-1}$.
\end{lemma}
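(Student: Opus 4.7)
The plan is to mirror the three-part structure of Lemma~\ref{lemma1}, adapted to Model~II with $r = K - M + 1$ super-messages and $F = \binom{K}{M}$ functions, one supported on each $M$-subset of $[K]$. For the rate, the super-messages $\hat{\mathbf{X}}_1, \ldots, \hat{\mathbf{X}}_r$ are linearly independent $\mathbb{F}_q$-linear combinations of the jointly uniform and independent messages $\mathbf{X}_{[K]}$, and hence are themselves jointly uniform and independent over $\mathbb{F}_{q^m}$, each with entropy $L$. Since Step~4 applies the PC scheme of~\cite{sun2018capacity} verbatim to $r$ messages and $F$ functions, the achieved rate equals that of the PC scheme, namely $(1 + 1/N + \cdots + 1/N^{r-1})^{-1} = (1 + 1/N + \cdots + 1/N^{K-M})^{-1}$.

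For recoverability, the Modified Specialized GRS Code construction guarantees the existence of a unique function $Z_{f^*}$ among $\{Z_f\}_{f \in [F]}$ whose support is exactly $S$; by construction, up to a known nonzero scalar,
\[
Z_{f^*} = \sum_{j \in S \setminus \{W\}} c_j X_j + c X_W,
\]
where $c \in \mathbb{F}_q^{\times} \setminus \{c_W\}$ is the user's random draw in Step~1. Once the PC scheme delivers $Z_{f^*}$ to the user, subtracting the side information $Y^{[S,C]}$ yields $(c - c_W) X_W$, and rescaling by $(c - c_W)^{-1}$ recovers $X_W$; this is well-defined precisely because $c \neq c_W$.

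The main obstacle is the $(W, S)$-privacy proof, which is more delicate than in Lemma~\ref{lemma1} because Model~II couples $\mathbf{W}$ and $\mathbf{S}$ via $\mathbf{W} \in \mathbf{S}$ and introduces the conditional sampling of $c$ given $c_W$. The first step is to verify that, conditional on $\mathbf{S} = S'$, the coefficients $\{\beta_j\}_{j \in [K]}$ are jointly i.i.d.\ uniform over $\mathbb{F}_q^{\times}$ regardless of which element $W' \in S'$ is the demand index: the $c_j$'s for $j \in S' \setminus \{W'\}$ are i.i.d.\ uniform by assumption, and marginalizing out $c_{W'}$ shows that $c$, and hence $\beta_{W'}$, is uniform over $\mathbb{F}_q^{\times}$ and independent of the other $\beta_j$'s. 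Consequently, the conditional distribution of $\mathbf{Q}_n^{[\mathbf{W},\mathbf{S},\mathbf{C}]}$ given $\mathbf{S} = S'$ does not depend on the choice of $W' \in S'$. Combining this with the function-privacy of the PC scheme---which, under the bijection $f \mapsto \mathcal{Z}_f$ between $[F]$ and the $M$-subsets of $[K]$ together with the identity $\mathcal{Z}_{f^*} = S$ valid in Model~II, yields $\Pr(\mathbf{S} = S' \mid \mathbf{Q}_n^{[\mathbf{W},\mathbf{S},\mathbf{C}]} = Q_n) = 1/\binom{K}{M}$---one obtains for every $W' \in [K]$ and $S' \ni W'$ with $|S'| = M$,
\[
\Pr(\mathbf{W} = W', \mathbf{S} = S' \mid \mathbf{Q}_n^{[\mathbf{W},\mathbf{S},\mathbf{C}]} = Q_n) = \frac{1}{\binom{K}{M}} \cdot \frac{1}{M} = \frac{(M-1)!\,(K-M)!}{K!},
\]
which equals the prior $\Pr(\mathbf{W} = W', \mathbf{S} = S') = \tfrac{1}{K}\binom{K-1}{M-1}^{-1} = \tfrac{(M-1)!\,(K-M)!}{K!}$, completing the $(W,S)$-privacy verification.
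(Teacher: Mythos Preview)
Your proof is correct and follows the same approach as the paper, which simply states that the argument is ``similar to the proof of Lemma~\ref{lemma1}, and hence omitted.'' You have filled in the details that the paper leaves implicit, with the appropriate substitutions $r=K-M+1$, $F=\binom{K}{M}$, and $\mathcal{Z}_{f^*}=S$.

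Your treatment is in fact more careful than the paper's on one point that is genuinely new in Model~II: the verification that the $\{\beta_j\}_{j\in[K]}$ remain jointly i.i.d.\ uniform over $\mathbb{F}_q^{\times}$ despite the conditional sampling $c\sim\mathrm{Unif}(\mathbb{F}_q^{\times}\setminus\{c_W\})$. This step has no counterpart in the Model~I proof (there every $\beta_j$ for $j\in S$ is directly $c_j/p(\omega_j)$ with $c_j$ uniform), so the paper's ``similar to Lemma~\ref{lemma1}'' glosses over it. Your marginalization argument showing that $c$ is uniform on $\mathbb{F}_q^{\times}$ and independent of $\{c_j\}_{j\neq W}$ is exactly what is needed to conclude that the vectors $\underline{u}_1,\dots,\underline{u}_r$ leak nothing about $(W,S)$ before the PC layer is applied.
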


\begin{proof}
The proof is similar to the proof of Lemma \ref{lemma1}, and hence omitted to avoid repetition.
\end{proof}

\section{Conclusion}
In this paper, we studied the multi-server setting of the Private Information Retrieval with Private Coded Side Information (PIR-PCSI) problem. In this problem, there is a database of $K$ messages replicated across $N$ servers, and there is a user who initially has a random linear combination of a random subset of $M$ messages in the database as side information. The goal of the user is to retrieve one message from the servers, while protecting the \mbox{identities} of both the demand message and the side information messages jointly. We considered two different models for this problem depending on whether the side information is a function of the demand message or not. First, we focused on the setting in which the side information is not a function of the demand message. For this setting, we proved that the capacity is given by $\left(1+{1}/{N}+\dots+{1}/{N^{K-M-1}}\right)^{-1}$. Then, we considered the setting in which the side information is a function of the demand message. For this setting, we show that the capacity is lower-bounded by $\left(1+{1}/{N}+\dots+{1}/{N^{K-M}}\right)^{-1}$. Our proposed achievability schemes are inspired by our recently proposed scheme for the single-server PIR-PCSI problem in conjunction with the scheme proposed by Sun and Jafar for multi-server private computation problem.

\bibliographystyle{IEEEtran}
\bibliography{PIRRefs}
\end{document}